\def\sz{\sigma_3}
\newcommand{\bea}{\begin{eqnarray}}
\newcommand{\eea}{\end{eqnarray}}
\def\bi{\begin{itemize}}
\def\ei{\end{itemize}}
\def\bc{\begin{center}}
\def\ec{\end{center}}
\def\C{\hbox{$\mit I$\kern-.7em$\mit C$}}
\def\R{\hbox{$\mit I$\kern-.6em$\mit R$}}
\def\ket#1{|#1\rangle}
\newcommand{\one}{\mbox{$1 \hspace{-1.0mm}  {\bf l}$}}
\def\ket#1{\left| #1\right>}
\def\bra#1{\left< #1\right|}
\newtheorem{theorem}{Theorem}
\newtheorem{lemma}[theorem]{Lemma}
\begin{document}
\author{C. Spee}
\affiliation{Institute for Theoretical Physics, University of
Innsbruck, Innsbruck, Austria}
\author{J. I. de Vicente}
\affiliation{Departamento de
Matem\'aticas, Universidad Carlos III de Madrid, Legan\'es (Madrid),
Spain}
\author{B. Kraus}
\affiliation{Institute for Theoretical Physics, University of
Innsbruck, Innsbruck, Austria}
\title{Remote entanglement preparation}
\begin{abstract}

We introduce a new multipartite communication scheme, with the aim
to enable the senders to remotely and obliviously provide the
receivers with an arbitrary amount of multipartite entanglement. The
scheme is similar to Remote State Preparation (RSP). However, we
show that even though the receivers are restricted to local unitary
operations, the required resources for remote entanglement
preparation are less than for RSP. In order to do so we introduce a
novel canonical form of arbitrary multipartite pure states
describing an arbitrary number of qubits. Moreover, we show that if
the receivers are enabled to perform arbitrary local operations and
classical communication, the required resources can drastically be
reduced. We employ this protocol to derive robust entanglement
purification protocols for arbitrary pure states and show that it
can also be used for sending classical information.
\end{abstract}
\maketitle

Quantum information theory offers revolutionary ways to process and
transmit information. In order to understand the fundamental laws of quantum information processing one needs to investigate both, the classical as well as the  quantum resources which are required to achieve
certain tasks, such as quantum
communication \cite{LoPo99}. Naturally,
entanglement is an important resource in this context. Due to that
and its foundational interest an enormous effort has been made to
achieve a better understanding of entanglement \cite{entreview}. However, despite
considerable progress, many questions remain unanswered,
particularly in the realm of multipartite states \cite{entreview}.
In this context, the following questions, which we address in this
article, arise very naturally: Do there exist quantum communication
protocols which allow some parties to supply other spatially
separated parties with arbitrary multipartite entanglement? If the answer is yes, what are the required
resources and how do they compare to the ones required in other
communication protocols? Our purpose is two-fold, besides shedding
light in quantum information protocols and their cost, we aim at
obtaining a better understanding of multipartite entanglement and
its applications.

Questions as those posed above are the focus when addressing the
limits of quantum information processing in fundamental protocols
such as teleportation \cite{tele}, dense coding \cite{Be93}, and Remote State Preparation (RSP) \cite{Lo00,BennettDiVincenzo01}. As pointed out in \cite{LoPo99} it is necessary to consider both, the classical as well as the quantum resources in this context and it would be misleading to ignore one of them. The aim of the bipartite communication protocol, RSP is to enable the sender, Alice ($A$), who is spatially
separated from the receiver, Bob ($B$), to prepare B's system in a
certain state {\it known to $A$} using entanglement and classical
communication \cite{Lo00,BennettDiVincenzo01}. The protocol is
called faithful if $B$ obtains deterministically the desired state
and it is called oblivious if there is no leaking of further
information about the state to $B$ than the information that is
already contained in the state itself. One distinguishes between the asymptotic case \cite{BennettDiVincenzo01,BennettHayden04} and 
the single copy case. In this article we will focus on the latter, where it has been proven that the resources needed to perform oblivious and faithful RSP for a generic ensemble of states coincide with the resources required for quantum
teleportation \cite{tele}, i.e. $1$ ebit and $2$ cbits per
transmitted qubit \cite{LeSh03,Lo00}. Thus, the fact that $A$ knows
the state to be prepared does not help in reducing the required
resources, unless one would restrict the set of states to be
transmitted \cite{Lo00,Pati00}. 

In this paper we introduce a multipartite communication scheme. In contrast to RSP, the aim here is not to enable $A$ to prepare an arbitrary state for $B$, but to provide $B$, who is decomposed into several spatially separated $B_i$, with an arbitrary amount of multipartite entanglement. All parties, the ones in $A$ ($A_i$) and, more importantly, the ones in $B$ ($B_i$) are restricted to local operations, which makes the entanglement received by $B$ a useful resource. The classical communication within $A$ and $B$ resp. is considered to be for free. We will show that this multipartite Remote Entanglement Preparation (REP) requires less resources than RSP, despite the restriction to local operations.  As we will see, the REP protocol can be easily turned into a RSP protocol. However, in order to stress the difference between REP and RSP let us note that in the case where $B$ might learn about the state, RSP can be performed using solely classical communication. Contrary, REP cannot be performed since the various $B_i$ are spatially separated and are therefore restricted to Local Operations assisted by Classical Communication (LOCC), which does not allow them to create any entangled state.

The outline of the paper is the following. First of all, we present the main ideas of REP by considering the simplest example of bipartite entanglement preparation. In order to generalize it to multipartite systems we introduce a novel Canonical Form (CF) for arbitrary $n$--qubit states, which depends on $P_n$ (see below) parameters. Any $n$--qubit state can be transformed into its CF via Local Unitary operators (LUs). Then, we show that any $n$--qubit state in the CF can be generated deterministically by measuring locally $P_n$ qubits of a certain $(n+P_n)$--qubit resource state. These properties will ensure that if $A$ and $B$ would share this resource state, where $A$ holds the $P_n$ qubits to be measured, she is able to prepare B's system in an arbitrary state in the CF up to LUs, which are performed by $B$ after receiving the classical information about those LUs of $A$. We show that this protocol has the following properties: (I) it is oblivious, (II) it is faithful (III) all actions performed by $A$ and $B$ are local (IV) the number of
ebits shared by $A$ and $B$ is $n$ (V) the number of cbits needed
to be communicated from $A$ to $B$ is $2n-1$, which is even less than what is required for RSP  \cite{LeSh03}. The parties constituting $B$, knowing that they will receive a state in the CF can then, in case this is required apply LUs to obtain an arbitrary state. However, in most cases, this will, arguably not be required, since $B$ will make use of the entanglement contained in the state. That is, the resource is entanglement, not a particular state. Furthermore, we go one step beyond that and provide a protocol for remote maximally entangled state preparation. In this context we use the notion of maximally entangled sets \cite{dVSp13}. Given a maximally entangled set for $n$--partite states, any other $n$--partite state can be prepared deterministically via LOCC.
We show that in the case of $3$--qubit states, this allows to reduce
 the required resources drastically. Moreover, we emphasize the relevance of the new truly multipartite communication scheme by demonstrating that it can be utilized to derive robust purification protocols for arbitrary states and to transmit classical information.

Let us explain the main idea of the protocol by considering the first non--trivial example of preparing arbitrary bipartite entanglement. Due to the existence of the Schmidt decomposition \cite{Nielsen-Chuang} an arbitrary bipartite state can be written (up to LUs) as
$\ket{\Psi(\alpha)}=Z_{12}(\alpha)\ket{+}^{\otimes 2}$ where $\alpha \in \R$ and $\ket{\pm}$ denotes the eigenbasis of $\sigma_1$. Here, and in the following $Z_{i_1,\ldots ,i_k}(\alpha)$ denotes the phase gate $e^{i\alpha \sz^{(i_1)}\otimes \ldots \otimes \sz^{(i_k)}}$ acting non--trivially on the $k$ systems $i_1,\ldots, i_k$ and the Pauli operators are denoted by $\sigma_i$, with $\sigma_0=\one, \sigma_1=\sigma_x, \sigma_2=\sigma_y, \sigma_3=\sigma_z$. REP can be achieved as follows. Initially $A$ and $B$ share the
resource state $H^{\otimes 3}\ket{GHZ}_{A,B_1,B_2}$, where $H$ denotes the Hadamard gate and $\ket{GHZ}\propto \ket{000}+\ket{111}$. $A$ chooses the value of $\alpha$, depending on how much entanglement she wants to prepare in $B's$ system. She performs a Von Neumann measurement on her
qubit, $A$, in the basis

\begin{eqnarray}\label{eqbasis}\mathcal{B}_{\alpha}=\{\ket{\phi^i(\alpha)}\equiv \sigma_3^i Z(-\alpha)\ket{+}\}_{i=0,1},\end{eqnarray} where $i$ denotes the measurement outcome. It is easy to verify that if $A$'s measurment outcome is $0$ ($1$), $B's$ system is prepared in the state
$\ket{\Psi(\alpha)}$ ($\sz\otimes\sz\ket{\Psi(\alpha)}$) resp.. Hence, once $A$ tells $B$ the measurement outcome, $B$ can apply the Pauli
operators locally to obtain the state $\ket{\Psi(\alpha)}$ deterministically. The
protocol is oblivious since both, the transmitted classical information and the resource state are
independent of the prepared state. Since the classical information
within $B$ is for free, the required resources are $\frac{1}{2}$
ebits and $\frac{1}{2}$ cbits per transmitted qubit. In contrast to
that, RSP requires $1$ ebit and $2$ cbits per qubit \cite{LeSh03}.

We generalize this protocol now to arbitrary many qubits. One of the difficulties here is that there does not exist a simple generalization of the Schmidt decomposition to $n$--partite systems, which reflects the difference between bipartite and multipartite states. Thus, we first have to introduce a Canonical Form (CF) for arbitrary $n$--qubit states such that each state is LU--equivalent to its CF. As we will show, for $n\geq 3$, the CF depends on $P_n=2 ^{n+1}+ 2^{n-3}-3(n+1)$ real parameters (phases), which we denote by $\alpha_i$. The corresponding CF will be denoted by $\ket{\Psi_n(\{\alpha_i\})}$ or simply by $\ket{\Psi_n}$. Then, we derive a ($P_n+n$)--qubit resource state in analogy to the GHZ--state in the example above. This resource state, $\ket{\Phi_n}$, has the property that performing local Von Neumann measurements on $P_n$ of the qubits in the bases $\mathcal{B}_{\beta_i}$, given in Eq. (\ref{eqbasis}) for properly chosen phases, $\beta_i=\pm \alpha_i$ \footnote{It will become clear later how the signs have to be chosen.} leads, for any measurement outcome, to a state \bea \label{Bstates} \sigma_{i_1}\otimes \sigma_{\bf i}\ket{\Psi_n(\{\alpha_i\})},\eea where $i_1\in \{0,3\}$ and ${\bf i}\in \{0,1,2,3\}^{n-1}$ depend on the measurement outcomes. Here, and in the following we use the notation $\sigma_{\bf j}=\sigma_{j_1}\otimes \ldots \sigma_{j_k}$, where ${\bf j}$ is a $k$--dimensional vector (with $k$ being specified unless it is clear from the context) with entries $j_l\in \{0,1,2,3\}$. With all that, the REP protocol works then as follows:
(i) Initially $A$ and $B$ share the resource state $\ket{\Phi_n}$, where $A$ is holding the $P_n$ qubits which need to be measured and $B$ the other $n$ qubits.
(ii) $A$ chooses the phases $\{\alpha_i\}$ and therefore the entanglement $B$ receives and performs local measurements on her qubits in the bases $\mathcal{B}_{\beta_i}$, given in Eq. (\ref{eqbasis}) for $\beta_i=\pm \alpha_i$. This prepares $B$'s system in one of the states in Eq. (\ref{Bstates}).
(iii) $A$ sends the classical information $i_1\in \{0,3\}$ and ${\bf i}\in \{0,1,2,3\}^{n-1}$ to $B$.
(iv) $B$ applies the local Pauli operator $\sigma_{i_1}\otimes \sigma_{\bf i}$ to obtain the desired state.

Note that for any measurement outcome, only two out of the four Pauli operators occur on the first qubit [see Eq. (\ref{Bstates})]. This is the reason, why the required classical communication is only $2n-1$ cbits. Since neither this information nor the resource state depend on the state $A$ wants to prepare (like the GHZ--state did not in the example above) this faithful protocol is also oblivious. Moreover, all actions performed by $A$ and $B$ are local and the number of
ebits shared by $A$ and $B$ is $n$ \footnote{This can be seen using the fact that the resource state is a stabilizer state, as we will show later on.}. Note that $B$ has to
know that he receives a state in the CF since otherwise his state
would be completely mixed due to the randomness of the LUs. However, as we will show, the
description of the CF is a side information that is independent of
$n$ and it can be agreed upon by $A$ and $B$ beforehand as part of
the protocol. Thus, even though $B$ receives an arbitrary amount of
multipartite entanglement, this multipartite protocol outperforms RSP despite the fact that all actions are local
single qubit operations.

Let us now introduce a CF of $n$--qubit states. In \cite{dViCa12} we have shown that an arbirary $3$--qubit state is (up to LUs) of the form
\begin{eqnarray}\label{standform3}\ket{\Psi_3}=Z_{13}(\alpha_1) Z_{12}(\alpha_2) (T_2(\alpha_3,\alpha_4) \otimes T_3)Z_{23}(\alpha_5)\ket+^{\otimes 3},\end{eqnarray} where $T_3=  e^{-i\frac{\pi}{4}\sigma_1 }Z(-\frac{\pi}{4})H$, $T_2(\alpha_3,\alpha_4)=e^{i\frac{\pi}{4}\sigma_1 }Z( \alpha_3) e^{-i\frac{\pi}{4}\sigma_1} Z(\alpha_4)H $. Here, and in the following, the qubits are
numerated consecutively and the subscript of any operator different than the Pauli operators indicates on which system it is acting on. We generalize now this CF for an arbitrary number of qubits. The aim is to achieve a systematic generalization such that, as in Eq. (\ref{standform3}), the only gates occurring in the CF are local Clifford gates \cite{Nielsen-Chuang} and phase gates and that on the first qubit only phase gates are acting on. As we will see later, this will ensure that the states $B$ obtains are of the form given in Eq. (\ref{Bstates}).

We consider first the $4$ qubit case. W. l. o. g we write
an arbitrary $4$--qubit state (up to a LUs) as $ \sqrt{2}^{-1}(\ket{0}\ket{\chi_0}+\ket{1}U_2 U_3 U_4\ket{\chi_1}),$ where both, $\ket{\chi_{0,1}}$ are normalized $3$--qubit states in the CF [Eq. (\ref{standform3})] and for $i=2,3,4$, $U_i=V_i ^\dagger Z(2\alpha_i)V_i$, with $V_i=e^{i\beta_i \sigma_2}Z(\gamma_i)$ for some phases $\beta_i,\gamma_i$. Using the CF of the three qubit states and the fact that a controlled 3--qubit phase gate can be written as a product of a 3--qubit and a 2--qubit phase gates, 
it is straightforward to find the following CF for $4$--qubit states,

\bea\nonumber \ket{\Psi_4} = \prod_{i=2}^4 Z_{1i}(\alpha_{i})\prod_j W^j \ket+^{\otimes 4},\nonumber \eea
where $W^j$ are either local Clifford gates acting on qubits $2,3,4$ or phase gates acting on up to three qubits (including qubit 1).

In the same way the CF of an arbitrary number of qubits, $n$, can be derived. As before, it is obtained by the CF of $(n-1)$--qubit states and is therefore defined via the CF of $3$--qubit states [Eq. (\ref{standform3})]. Hence, the CF can be easily described and constructed for arbitrary $n$. The number of required parameters ($\alpha_i$), $P_n$, is given recursively by $2P_{n-1}+3(n-1)$, with $P_3=5$, which leads to $P_n=2^{n+1}-3(n+1)+2^{n-3}$.
Due to the construction only local Clifford and phase gates occur in the decomposition and on the first qubit only phase gates are acting on.

In order to derive now the resource state, $\ket{\Phi_n}$ we use the notion of deterministic gate implementation (see also
\cite{DuBr07,OnewayQC,AndersBrowne09}). In \cite{DuBr07} it has been
shown how arbitrary phase gates can be implemented deterministically
on an arbitrary unknown input state, $\ket{\xi}$. We briefly recall
the idea here and then generalize it to arbitrary gates. To implement a single qubit phase gate, $Z(\alpha)$, deterministically, one might use the 3--qubit GHZ-state as an auxiliary state and perform a measurement in the Bell basis $\{\sigma_i\otimes \one \ket{\Phi^+}\}_{i=0}^3$, with $\ket{\Phi^+}\propto \ket{00}+\ket{11}$ on one of the qubits, say qubit $3$ and the input qubit. Depending on the measurement outcome, $i\in \{0,1,2,3\}$, party $2$ chooses $\beta$ to be $+\alpha$ for $i=0,3$  and $-\alpha$ for $i=1,2$ and measures in the basis $\mathcal{B}_{\beta}$ given in Eq. (\ref{eqbasis}). It is easy to see that the resulting state is $\sigma_3^k\sigma_i Z(\alpha)\ket{\xi}$, where $k \in\{0,1\}$ denotes the measurement outcome of party $2$. Thus, choosing the  measurement basis of party $2$ depending on the outcome of the Bell-measurement allows to implement the gate deterministically, since $\sigma_3^k \sigma_i$ can be applied to obtain the desired state. In the following we will call this operator correction operator and the qubit, which has to be measured in the basis ${\cal B}_{\beta }$ in order to implement the gate, controlling qubit.

This idea can be easily generalized to implement phase gates acting on arbitrary many qubits \cite{DuBr07}. Similarly to above, one uses the fact that for any $m$--qubit phase gate, $Z_m(\alpha)$, we have
$Z_m(\alpha) \sigma_{\bf j}=\sigma_{\bf j} Z_m((-1)^l \alpha)$, with $l\in\{0,1\}$ for any ${\bf j}\in \{0,1,2,3\}^m$. Again only one controlling qubit is required which is measured in the basis $\mathcal{B}_{(-1)^l \alpha}$. If $k$ denotes the measurement outcome, then similarly to above, the local correction operators are $(\sigma_3^{\otimes m})^k \sigma_{\bf i}$, where ${\bf i}$ denotes the outcome of the Bell measurement.
\begin{figure}[h!]
\begin{center}
  \includegraphics[scale=0.30]{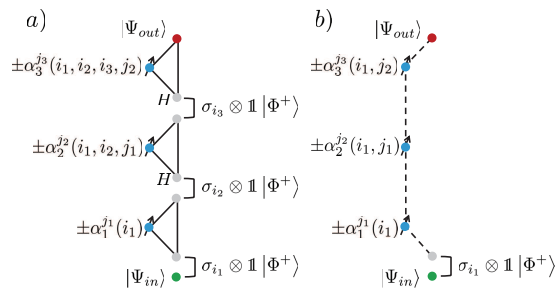}
  \caption{Deterministic gate implementation of a single qubit gate: In a) the 3 gates are implemented consecutively. However, the Bell--measurements performed here can be done independently of the input state, which leads to the $5$--qubit state presented in b). Note that the local Clifford gates (here $H$) only alter the Pauli operators. $i_k\in \{0,1,2,3\}$ denotes the measurement outcome of the Bell measurement and $j_k\in \{0,1\}$ the measurement outcome of the measurement on the controlling qubits. The signs of the measurement angles depend on the increments, $i_k, j_l$.}
\label{resource3}
\end{center}
\end{figure}

The gate implementation explained above can be generalized to implement arbitrary gates. In order to see that, let us first consider an arbitrary single qubit unitary, $U(\alpha_1,\alpha_2,\alpha_3)\equiv Z(\alpha_1) H Z(\alpha_2) H Z(\alpha_3)$. The gates could be implemented consecutively using the corresponding auxiliary states [see Fig. 1]. However, as illustrated in Fig. 1 the single qubit gate can also be implemented deterministically via a $5$--qubit state. Note that the order of the measurements has to be respected, since the signs of the measurement angles depend on the measurement outcome of the previous measurements (see Fig. 1). Note further that the $5$--qubit state (Fig. 1) is a stabilizer state \cite{stab}, since the GHZ state is a stabilizer state and all the measurements and the local Clifford gates map a stabilizer state to a stabilizer state.

Since any 2--qubit gate can be decomposed into local unitaries and a non--local part, which is of the form $e^{i\sum_{i=1}^3 \alpha_i \sigma_i\otimes \sigma_i}$ \cite{KrausCi01} such a gate can be easily decomposed into local Clifford and phase gates. Hence, the state required to implement this gate can be determined like in the case of single qubit gates. As any gate can be decomposed into single--and two--qubit gates, the successive implementation of gates explained here, allows to deterministically implement an arbitrary $n$--qubit gate. The corresponding auxiliary states are always $(p_n+2n)$--qubit stabilizer states, where $p_n$ denotes the number of phase gates (which are not Clifford gates) in the decomposition. Note that the value of the phases can be fixed at the very end by choosing the corresponding measurement basis.

For REP we construct now the resource state, $\ket{\Phi_n}$. We denote by $U_n$ the unitary such that
$\ket{\Psi_n}\equiv U_n \ket{+}^{\otimes n}$ is the CF of an $n$--qubit state. As we have shown before, $U_n$
can be decomposed into local Clifford gates and $P_n$ phase gates. Thus, a $(P_n+2n)$--qubit state is required to implement $U_n$ deterministically for an arbitrary input state. Hence, using $\ket{+}^{\otimes n}$ as an input state, leads to a
$(n+P_n)$--qubit resource state,
$\ket{\Phi_n}$, which can now be employed to generate any state $\ket{\Psi_n(\{\alpha_i\})}$. As shown above, measuring locally the $P_n$
controlling qubits in the bases ${\cal B}_{\pm \alpha_i}$ [see Eq. (\ref{eqbasis})] leads to one of the states in Eq. (\ref{Bstates}). The reason why only two out of four Pauli operators occur on the first qubit is that on the first qubit only phase gates (and no local Clifford gates) are acting on. 

Combining the CF and the construction of the corresponding resource state, we have now derived the REP protocol presented at the beginning.
Considering for instance the $3$--qubit case, it is straightforward to show that the resource state is
given by the $8$ qubit state (see also Fig. 2 in Appendix B),

\begin{align}
\label{res3}
\ket{\Phi_3}&=H_{6}Z_{8}(\frac{\pi}{4})H_{8}Z_{7}(-\frac{\pi}{4})Z_{2}(\frac{\pi}{4})S_{12}S_{23}\\ \nonumber &S_{37}S_{24}S_{25}S_{27}S_{46}S_{48}
S_{58}S_{67}S_{78}\ket+^{\otimes 8},
\end{align}
where $S_{ij} =\ket{0}\bra{0}\otimes\one+\ket{1}\bra{1}\otimes \sigma_3.$ Qubits $6$, $7$ and $8$ are held by B. Whereas, the first five qubits are the controlling qubits, which are held by $A$. She chooses freely the parameters $\alpha_i$ in Eq.
(\ref{standform3}) and therefore also the multipartite
entanglement $B$ obtains. First, qubit 1 is measured in
$\mathcal{B}_{\alpha_5}$ (see Eq. (\ref{eqbasis})), then qubit 2 in
$\mathcal{B}_{\pm \alpha_4}$ etc.,  As before, the measurements have to be performed in a certain order
to make sure that the protocol is deterministic. $A$ computes the
local correction operators and sends this information to $B$. This requires only
$\frac{5}{3}$ cbits per qubit, i.e. in total $1$ cbit less than in
RSP. The required
entanglement shared between A and B corresponds to 1 ebit per qubit \footnote{
    Note that the REP can be easily turned into a RSP, if the resource state is adapted to take LUs into account. Since then all Pauli operators can occur, 2 $n$ cbits are needed, as was shown to be required in \cite{LeSh03}.}.

So far we have considered the situation where $A$ wants to provide
$B$ with an arbitrary amount of multipartite entanglement. However, one might even go beyond that, as our only
requirement is that all the actions are local. Suppose that we allow $B$ to perform an arbitrary
LOCC protocol. In this case it is sufficient to provide $B$ with
those states, which are required to obtain any other state via LOCC.
For instance, if $A$
provides $B$ with the maximally entangled state, $\ket{\Phi^+}$, $B$
can afterwards transform this state deterministically into any other
bipartite state \cite{nielsen}. Since this can be achieved by sharing
initially a GHZ--state, the required resources are $\frac{1}{2}$
ebits and $\frac{1}{2}$ cbits per qubit, which coincides in this
case with the resources required for REP. Let us now show that the
resources can be drastically reduced in case of $3$--partite
entanglement. In \cite{dVSp13} we show that via LOCC any 3--qubit state can be obtained
from a state given in Eq.
(\ref{standform3}) with $\alpha_3=\alpha_4=\pi/4$. The
corresponding $6$--qubit resource state can be easily determined (see Fig. $3\, a$ in Appendix B). If $A$ and $B$ share this state, $A$ can remotely
and obliviously prepare any state with $\alpha_3=\alpha_4=\pi/4$ for $B$ by choosing appropriate
measurements on the 3 controlling qubits. In this case, $A$ just needs to
send 3 cbits to $B$ as she only needs to specify the outcomes of
3 local measurements. Moreover, as can be easily seen from Fig. 3 (b) in Appendix B, A and B share only 2 ebits. Thus, only one cbit and 2/3 ebits per qubit are required.

Finally, let us present some applications of the REP protocol introduced here, which emphasize the relevance of REP. For instance, it can be used to send classical information, which is encoded in the correction operators (see Appendix A). Moreover, REP can also be used to derive
robust entanglement purification protocols for arbitrary states (see Appendix B for more details).
Those protocols are very desirable since most applications of
quantum information theory require pure entangled states. Despite
its importance, purification protocols \cite{purereview} have been
only devised for stabilizer states \cite{2colgraphpur,3colgraphpur}, the W state
\cite{purw}, and LMESs \cite{CadV12}. In order to purify to arbitrary states, the spatially separated parties share initially the resource state, $\ket{\Phi_n}$ (e.g. the state in Eq. (\ref{res3}) for three--partite entanglement purification), where the controlling qubits are kept by one party. Then, the resource state, which is a stabilizer state, is purified. After that, the controlling qubits are measured to obtain the desired state shared between the spatially separated parties. In Appendix B we show that
this purification process outperforms previously known multipartite
purification protocols in the sense that the tolerable noise is much
higher.

We would like to thank W. D\"ur for fruitful discussions. The research was funded by the Austrian Science Fund (FWF): Y535-N16.

\section{APPENDIX A: Sending Classical Information}

The REP protocol does not only allow us to transmit quantum information but also classical information. In order to send classical information one restricts to REP of locally maximally entanglable states (LMESs) \cite{AppKrKr08}. Let us first review some of the basic properties of LMESs. Any $n$-qubit LMES can be written up to LUs in the form:
\bea\ket\Psi\label{LME} \equiv U \ket +^n=U_{1,\ldots,n} \prod U_{i_{k_1},\ldots,i_{k_n}}\,\ldots\,\prod U_i \ket +^{\otimes n},
\eea where $ U_{i_{k_1},\ldots,i_{k_n}}=e^{i \alpha_{i_{k_1},\ldots,i_{k_n}}\sigma_{3}^{(i_{k_1})}\otimes\ldots\otimes\sigma_3^{(i_{k_n})}}$.
Note that equivalently a LMES (up to LUs) can be written as $2^{-\frac{n}{2}}\sum_{i_1,\ldots,i_n=0}^1 e^{i\beta_{i_1\ldots i_n}}\ket{i_1\ldots i_n}$, with $\beta_{i_1\ldots i_n}\in\R$. In the following we will call a state with $\beta_{i_1\ldots i_n}\in\{0,\pi\}$ $\pi$--LMES.
It has been shown that for any $n$-qubit LMES, $\ket{\Psi}$ [Eq. (\ref{LME})], one can construct a generalized stabilizer, generated by  $n$ operators
$S_k, k\in\{1,\ldots, n\}$ \cite{AppKrKr08}. These hermitian and unitary operators are of the form $ S_k = U \sigma_1^{(k)} U^\dagger$ and their common eigenbasis is given by $\{\sigma_3^\mathbf{i}\,\ket{\Psi}\}_{\mathbf{i}\in \{0,1\}^n}$, where $\sigma_3^0 = \one$ and $\sigma_3^1=\sigma_3$, i.e. \bea \label{stab}S_k \sigma_3^\mathbf{i}\,\ket{\Psi} = (-1)^{i_k}\sigma_3^\mathbf{i}\,\ket{\Psi}\,\,\,\,\,\forall k.\eea
A LMES is called $k$-colourable if one can assign to every qubit one out of $k$ colours in such a way that no qubits interacting via a phase gate have the same colour \cite{AppCadV12}.\\

LMESs can be easily implemented via deteterministic gate implementation, since $U$ [Eq. (\ref{LME})] corresponds to a product of phase gates. As has been shown in the main text, the correction operators that can occur due to the implementation procedure are always of the form $\sigma_3^\mathbf{i}$ for $\mathbf{i}\in \{0,1\}^n$. In particular in the deterministic gate implementation of a general LMES of the form of Eq. (\ref{LME}) all possible Pauli corrections occur with the same probablity. This is easy to see since in deterministic gate implementation of a single--qubit unitary $U_l$ the correction $\one\otimes\ldots\otimes \sigma_3^{(l)}\otimes\ldots\one$ occurs with probability $\frac{1}{2}$, which ensures that in total the bitstring $\mathbf{i}$ corresponding to the correction operation $\sigma_3^\mathbf{i}$ will be random. \\
In order to transmit classical information, $A$ remotely prepares a LMES, $\ket{\Psi}$ (known to $B$), for $B$. So $B$ obtains one of the states $\sigma_3^\mathbf{i}\,\ket{\Psi}$. Note that the bitstring $\mathbf{i}$ only depends on $A$'s measurement results. Therefore, it is completely random and known to $A$.
We show next that, once $B$ receives a copy of the state $\sigma_3^\mathbf{i}\,\ket{\Psi}$, for some $\mathbf{i}\in \{0,1\}^n$, he can perform local measurements to obtain partial information on the bitstring.
The classical information that is sent corresponds to the part of the bitstring that $B$ can determine via local measurements.
In order to show that $B$ can get partial information on the bitstring, we make use of the following lemma.
\begin{lemma}\label{lemma1}

Let $\ket{\Psi_{\bf 0}}$ be a $\pi$--LMES, so that $\{\ket{\Psi_{\bf i}}=\sigma_3^{\bf i}\ket{\Psi_{\bf 0}}\}_{\mathbf{i}\in\{0,1\}^n}$ is the common eigenbasis of a generalized stabilizer generated by $\{S_j=\sigma_1^{(j)}\otimes U_j\}_{j=1}^n$. Then for any $\mathbf{k} \in \{0,1\}^{n-1}$ (denoting the computational basis of all qubits but qubit $j$) and $\ket{l_x^j}\in \{\ket{+},\ket{-}\}$ with $\bra{l_x^j {\bf k}}\ket{\Psi_{\bf i}}\neq 0$ it follows that $\bra{\Psi_{\bf i}}S_j\ket{\Psi_{\bf i}}=\bra{l_x^j {\bf k}}S_j\ket{l_x^j {\bf k}}$.
\end{lemma}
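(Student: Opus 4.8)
The plan is to reduce both sides of the claimed identity to the same value, namely the $S_j$--eigenvalue of $\ket{\Psi_{\bf i}}$. The left-hand side is immediate from Eq.~(\ref{stab}): since $\ket{\Psi_{\bf i}}=\sigma_3^{\bf i}\ket{\Psi_{\bf 0}}$ is an eigenstate of $S_j$ with eigenvalue $(-1)^{i_j}$, we have $\bra{\Psi_{\bf i}}S_j\ket{\Psi_{\bf i}}=(-1)^{i_j}$. The entire content of the lemma is therefore to show that the single product-state expectation $\bra{l_x^j {\bf k}}S_j\ket{l_x^j {\bf k}}$ equals this same sign whenever the overlap $\bra{l_x^j {\bf k}}\ket{\Psi_{\bf i}}$ is nonzero.

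First I would pin down the structure of $U_j$. Because $\ket{\Psi_{\bf 0}}$ is a $\pi$--LMES, it reads $2^{-n/2}\sum_{\bf m}(-1)^{f({\bf m})}\ket{\bf m}$ in the computational basis, so the diagonal unitary $U$ with $\ket{\Psi_{\bf 0}}=U\ket{+}^{\otimes n}$ satisfies $U\ket{\bf m}=(-1)^{f({\bf m})}\ket{\bf m}$ for a Boolean function $f$. Hence $S_j=U\sigma_1^{(j)}U^\dagger$ acts by $S_j\ket{m_j,{\bf m}'}=(-1)^{f(m_j,{\bf m}')+f(1-m_j,{\bf m}')}\ket{1-m_j,{\bf m}'}$, where ${\bf m}'$ collects all qubits but $j$. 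The accompanying sign $g({\bf m}')\equiv(-1)^{f(0,{\bf m}')+f(1,{\bf m}')}$ is manifestly independent of the value $m_j$, which is exactly where the restriction to $\pi$--LMESs enters: for a generic LMES the two orientations of the bit flip would acquire opposite, non--real phases and $S_j$ would not factor. Here it does, recovering the stated form $S_j=\sigma_1^{(j)}\otimes U_j$ with $U_j$ \emph{diagonal} in the computational basis of the remaining $n-1$ qubits and eigenvalues $g({\bf m}')\in\{+1,-1\}$.

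Next I would exploit the eigenvalue equation in the $\sigma_1$--eigenbasis of qubit $j$. Writing $\ket{\Psi_{\bf i}}=\ket{+}_j\ket{\psi_+}+\ket{-}_j\ket{\psi_-}$, with $\ket{\psi_\epsilon}$ the (unnormalized) state of the other qubits conditioned on qubit $j$ being $\ket{\epsilon}$, and using $\sigma_1\ket{\pm}=\pm\ket{\pm}$, the identity $S_j\ket{\Psi_{\bf i}}=(-1)^{i_j}\ket{\Psi_{\bf i}}$ splits into $U_j\ket{\psi_\epsilon}=s_\epsilon(-1)^{i_j}\ket{\psi_\epsilon}$ with $s_\pm=\pm1$. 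Since $U_j$ is diagonal, each nonzero $\ket{\psi_\epsilon}$ is supported only on computational basis states ${\bf m}'$ with $g({\bf m}')=s_\epsilon(-1)^{i_j}$. The hypothesis $\bra{l_x^j {\bf k}}\ket{\Psi_{\bf i}}=\bra{\bf k}\ket{\psi_\epsilon}\neq0$ (where $\ket{l_x^j}=\ket{\epsilon}$) then forces $\ket{\bf k}$ into this support, so $g({\bf k})=s_\epsilon(-1)^{i_j}$. Evaluating the right-hand side directly gives $\bra{l_x^j {\bf k}}S_j\ket{l_x^j {\bf k}}=\bra{\epsilon}\sigma_1\ket{\epsilon}\,\bra{\bf k}U_j\ket{\bf k}=s_\epsilon\,g({\bf k})=s_\epsilon^2(-1)^{i_j}=(-1)^{i_j}$, which matches the left-hand side and completes the argument.

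I expect the only genuine obstacle to be the first step---establishing the factorization $S_j=\sigma_1^{(j)}\otimes U_j$ with $U_j$ diagonal and $\pm1$--valued, and isolating precisely why the $\pi$--LMES condition (rather than a generic LMES) is what guarantees it. Once that diagonal factorization is in hand, the remainder is a routine spectral decomposition of $U_j$ combined with the support argument above.
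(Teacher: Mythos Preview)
Your argument is correct. It differs from the paper's proof only in which tensor factor you expand over and in the mechanism used to force the single-term identity.

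The paper expands $\ket{\Psi_{\bf i}}$ over the computational basis of the $n-1$ qubits other than $j$, writing $\ket{\Psi_{\bf i}}\propto\sum_{\bf k}\ket{\bf k}\ket{\phi_{\bf k}}$ with $\ket{\phi_{\bf k}}\in\{\ket{+},\ket{-}\}$ (this is where the $\pi$--LMES hypothesis enters for them). They then compute $\bra{\Psi_{\bf i}}S_j\ket{\Psi_{\bf i}}$ as an average of the unit-modulus numbers $\bra{\phi_{\bf k}}\sigma_1\ket{\phi_{\bf k}}\bra{\bf k}U_j\ket{\bf k}$ and observe that, since this average is exactly $\pm1$, every summand must equal that value; the nonzero-overlap condition then forces $\ket{l_x^j}=\ket{\phi_{\bf k}}$. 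You instead expand over the $\sigma_1$-eigenbasis of qubit $j$ and turn the eigenvalue equation into a support constraint for the conditional states $\ket{\psi_\pm}$ in the eigenspaces of the diagonal $U_j$.

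The two arguments are essentially dual: the paper fixes ${\bf k}$ and determines which $\ket{\pm}$ appears, you fix $\ket{\pm}$ and determine which ${\bf k}$ can appear. Your route is slightly more explicit about why $U_j$ is diagonal with $\pm1$ entries (the paper simply invokes this), and it avoids the convexity/unit-modulus-average step in favour of a direct spectral support argument. Either way the conclusion $\bra{l_x^j{\bf k}}S_j\ket{l_x^j{\bf k}}=(-1)^{i_j}$ falls out immediately. One small remark: since the factorization $S_j=\sigma_1^{(j)}\otimes U_j$ is already part of the hypothesis, your first paragraph is doing a bit more than strictly needed; the essential extra ingredient you must extract from the $\pi$--LMES assumption is only that $U_j$ is diagonal with entries $\pm1$.
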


\begin{proof}\nonumber
Let $\ket{\Psi}$ be an element of the common eigenbasis. We write $\ket{\Psi}=\sum_{{\bf k}\in\{0,1\}^{n}} \ket{{\bf k}} \bra{{\bf k}}\Psi\rangle\equiv 2^{n-1/2} \sum_{{\bf k}\in\{0,1\}^{n-1}} \ket{{\bf k}} \ket{\phi_{\bf k}}$, with $\ket{\phi_{\bf k}}$ normalized. Due to the fact that $\ket{\Psi_{\bf 0}}$ is a $\pi$--LMES, it can be easily seen that $\ket{\phi_{\bf k}}\in \{\ket{+},\ket{-}\}\,\forall \mathbf{k}$. Then $\bra{\Psi}S_j\ket{\Psi}=
2^{-(n-1)} \sum_{{\bf k}\in\{0,1\}^{n}} \bra{\phi_{\bf k}}\sigma_1^{(j)} \ket{\phi_{\bf k}} \bra{{\bf k}} U^j \ket{{\bf k}}$, where we used the fact that $U_j$ is diagonal. Using that the expectation value of $S_j$ is $\pm 1$ and the fact that $|\bra{{\bf k}} U^j \ket{{\bf k}}|=1$ the equation above can only be fulfilled if $\bra{\phi_{\bf k}} \sigma_1^{(j)} \ket{\phi_{\bf k}} \bra{{\bf k}} U^j \ket{{\bf k}}=\langle S_j \rangle$ for any ${\bf k}$. Since furthermore $\bra{\phi_{\bf k}}\sigma_3\bra{\mathbf{k}}\ket{\Psi_{\bf i}}=0\,\forall \mathbf{k}$ this proves the statement.
\end{proof}
This Lemma allows us to identify the procedure for $B$ to learn part of the bitstring.
To determine $\langle S_j\rangle$ measure $\sigma_1$ on particle $j$ and $\sigma_3$ on the particles $N_j$ that are interacting with $j$. Denote by $M_j$ all particles that are neither $j$ nor in $N_j$. The state after the measurement corresponds to $\ket{l_x^j}_j\otimes\ket{\bf l}_{N_j}\otimes\ket{\phi^{l_x,\mathbf{l}}}_{M_j}$ for $\ket{\phi^{l_x,\bf{l}}}$ some state describing the remaining qubits and $\mathbf{l} \in \{0,1\}^{|N_j|}$. Note that we can choose, without loss of generality, $\ket{l_x^j{\mathbf{l}} 1\ldots 1}$ to compute the expectation value according to Lemma \ref{lemma1}, since $\bra{l_x^j {\mathbf {l}}1\ldots1}\ket{\Psi_{\bf i}}\neq 0$ and $S_j$ only acts on particle $j$ and the qubits interacting with it. Using Lemma \ref{lemma1} the expectation value of $\sigma_1^{(j)}\otimes U_j$ is given by $m_j m_{\mathbf{l}}$, where $m_j$ is the measurement outcome of party $j$ and $m_{\mathbf{l}}$ corresponds to $\bra{{\bf l}}U_j\ket{{\bf l}}$. This procedure requires that $B$ knows the generators of the generalized stabilizer of the LMES, $\ket{\Psi}$, that was prepared in deterministic gate implementation, i.e. he has to know $\ket\Psi$. Note that as long as the qubits $j_i$ are not interacting with each other this procudure allows to determine all $\langle S_{j_i}\rangle$ on a single copy of the state.

Note further that the measurements are local. \\
A $\pi$--LMES which is $k$-colourable and only involves $k$-qubit interactions is called $k$-regular LMES \cite{AppCadV12}. It is easy to see that they fulfill the above stated condition and, therefore, the expectation values corresponding to the same colour can be evaluated on a single copy. Hence, $k$-regular LMESs are useful in this context. This kind of classical information might be useful in the context of secure multipartite communication. \\
The resource state for the remote preparation of a LMES [Eq. (\ref{LME})] is maximally entangled in the splitting $A$ versus $B$, i.e. the reduced density matrix of $B$ is totally mixed and the entanglement between $A$ and $B$ is 1 ebit per qubit. Despite that fact, we have seen that $B$ can learn part of the bitstring via local measurements knowing that $A$ has performed her measurements beforehand.\\
As we have seen via choosing between different set-ups the REP protocol allows for the transmission of either quantum or classical information without actually sending it.

\section{APPENDIX B: Robust entanglement purification to arbitrary multipartite states}

While most applications in quantum information theory require
entangled pure states, actual implementations unavoidably introduce
noise to the target states. For this reason, purification protocols
to transform locally several copies of a mixed noisy state into a more
pure state have been thoroughly studied \cite{Apppurereview}. However,
with the sole exceptions of the $W$--state \cite{Apppurw} and the LMESs \cite{AppCadV12}, purification
protocols have been only devised for stabilizer states, which are states that are LU--equivalent to graph states.
Let us briefly recall the definition of graph states. Graph states correspond to $\pi$--LMESs, where each phase gate is acting only on 2 qubits. They are often associated with a mathematical graph consisting of a set of vertices, $V$, representing the qubits, and a set of edges, $E$, representing the two--qubit phase gates \cite{Appgraph}. Graph states constitute a notable class, for which
efficient purification procedures have been given
\cite{App2colgraphpur,App3colgraphpur}.
However, as is easy to understand,
not every mixed state can be purified locally and purification
protocols can only succeed if the noise lies below a certain
threshold.

Since our REP scheme relies on the parties sharing a stabilizer state and
then making local measurements on it to obtain an arbitrary
entangled state, this readily provides a purification protocol to
\textit{all} entangled states.
Furthermore, it turns out that the graph states corresponding to our resource states fall under
a class of graph states for which the existing purification
protocols are particularly robust against noise. Thus, we will show
that in certain scenarios using our scheme would allow to purify to
any given target state with significantly larger noise thresholds
than the purification protocols particularly devised for other
states such as the $3$--qubit $W$--state \cite{Apppurw} or LMESs \cite{AppCadV12}.

A usual scenario when studying purification schemes is the
following: A provider can prepare any desired entangled state, which
he then sends to the different distant parties that need to use it.
As each qubit is sent, it undergoes a local quantum channel which is
considered to be the main source of noise. We will consider here
that every qubit $i$ that is sent is subjected
to local depolarizing noise,
\begin{equation}\label{noise}
\mathcal{E}_p(\rho_i)=p\rho_i+\frac{1-p}4(\rho_i+\sigma_1\rho_i\sigma_1+\sigma_2\rho_i\sigma_2+\sigma_3\rho_i\sigma_3).
\end{equation}
We assume that the noise level $p$ is the same for all qubits that
need to be transmitted. Once the distant parties receive the noisy
state they can use LOCC to purify to their target state. To see more
clearly how our protocol would work, let us restrict ourselves to
the case of 3--qubit states for simplicity. Instead of sending the target state, A prepares the 8--qubit graph state corresponding to the resource stabilizer state given in Eq. (4) in the main text, which is needed to prepare
any pure 3--qubit entanglement and sends the corresponding 3 qubits
to the parties that constitute $B$ \footnote{One could also consider the case where one of the parties constituting $B$ holds the controlling qubits. In this case only two qubits need to be transmitted.}. These qubits undergo then some
noise, so $A$ and $B$ implement the corresponding graph-state
purification protocol to recover the original resource state
(4). After that, $A$ carries out the required measurements
and transmits the outcomes to the parties in $B$, who will then have
a more pure copy of the target state of choice. Therefore, one just needs
to study how robust is our graph state with the already devised
purification procedures for this kind of states.

It turns out that one of the most relevant properties of graph
states regarding purification is their colorability. A graph is
2--colorable if we can label each vertex using just two colors (say
$\mathcal{A}$ and $\mathcal{B}$) with the rule that no adjacent
vertices get the same color. $N$--colorable graphs are defined
analogously. Purification protocols for 2--colorable graph states
are simpler and usually more robust against noise (this is because
the qubits with different colors are purified independently and as
one purifies certain color, noise is introduced in the qubits of a
different color). As one can easily see from Fig.\ref{resource3}, our 8--qubit
resource graph state is 3--colorable.
\begin{figure}[h!]
\begin{center}
  \includegraphics[scale=0.20]{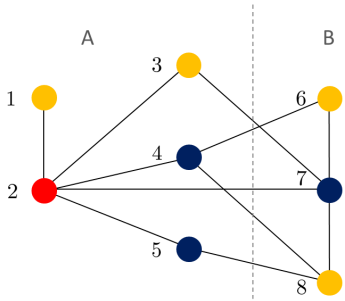}
  \caption{This graphic shows the graph state that is LU--equivalent to the 8-qubit resource state for REP in the 3-qubit case. }
\label{resource3}
\end{center}
\end{figure}

However, one may wonder
whether there exists a 2--colorable LU equivalent state to which the
parties can purify and after that apply local unitaries to obtain
the relevant graph state. All the different classes of graph states
up to 8 qubits have been studied in
\cite{Appgraphclasshein,Appgraphclassadan}. Using the invariants given in
\cite{Appgraphinvariantsadan}, it is lengthy but straightforward to
check that our graph belongs to the class number 98 in
\cite{Appgraphclassadan}. According to this reference, there is no
2--colorable graph in this class. Hence, one must use a 3--color
purification procedure in our case. However, as depicted in Fig.\ref{resource3},
one can color the qubits in such a way that the third color
$\mathcal{C}$ only appears once and not in the qubits that are sent
to $B$. It turns out that if the noise only affects the qubits that
are sent to $B$, the qubit with color $\mathcal{C}$ does not get
affected by the noise and, thus, a 2--colorable purification
procedure purifying qubits $\mathcal{A}$ and $\mathcal{B}$ is enough
to achieve the task. Implementing the procedure given in
\cite{App2colgraphpur}, one finds that the state can be purified if the
noise affecting the 3 transmitted qubits is such that $p\geq0.39$.
To make the model more realistic, we can also consider that the
qubits that are kept by $A$ undergo some (although weaker) form of
noise. Consider then similarly local depolarizing noise for these
other qubits with noise level $q$. Now, if $q\neq1$ all three colors
are affected by the noise and one then needs to actually carry out a
3-color purification protocol \cite{App3colgraphpur}. Although the
procedure is still relatively robust (as we will see below when
comparing to other protocols), the 3--colorability raises the noise
threshold $p$ for the transmitted qubits considerably even under
very weak forms of noise $q$ for the 5 qubits kept by $A$. In
particular, for $q=0.99$ one finds that purification succeeds when
$p\geq0.50$ while for $q=0.97$ the threshold is $p\geq0.56$.

However, interestingly, to purify locally to an arbitrary 3--qubit
state one does not need to be able to remotely prepare any 3--qubit
state. It is enough to be able to prepare any state in the $3$--qubit maximally entangled set as given in the
main text, and then transform it by LOCC to the desired state.
Therefore, it is enough to have the ability to purify to the
6--qubit resource graph state that allows for remote preparation of
the maximally entangled set.

\begin{figure}[h!]
\begin{center}
  \includegraphics[scale=0.25]{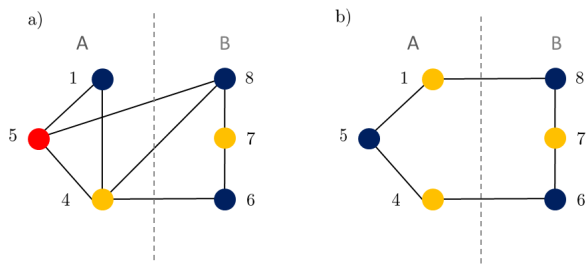}
  \caption{ $a)$ This graphic shows the graph state corresponding to the resource state for REP of the maximally entangled set. $b)$ Via local complementation one obtains the representative of class 18.}
\label{resourceMES}
\end{center}
\end{figure}

This graph state is also 3--colorable, but using again
the invariants of \cite{Appgraphinvariantsadan} one can find that this
state belongs to class 18, which includes 2-colorable graphs
\cite{Appgraphclasshein}. In fact, using the local complementation rule
\cite{Appgraphclasshein} it is easy to find the local unitaries that
transform our graph to the one shown in Fig.\ref{resourceMES} $b)$. One, therefore,
just needs to apply the 2--color purification protocol
\cite{App2colgraphpur}. This allows for much more robust purification
when all qubits are subjected to noise. In this case, the
corresponding thresholds are $p\geq0.44$ when $q=1$, $p\geq0.44$
when $q=0.99$ and $p\geq0.45$ when $q=0.97$.

As we have already stressed, not only the ability to purify to any
state is remarkable, but also the robustness of the scheme. To test
this, consider the purification protocols for LMESs \cite{AppCadV12}
and $W$ states \cite{Apppurw}, which are the few procedures known in
the literature besides graph states. For LMESs, we consider the
3--qubit regular LMES given by Eq.\ (\ref{LME}) when $U$ is the
3--qubit $\pi$-phase gate, e.g. $U=\one-2|111\rangle\langle111|$. If the
provider prepares this state and then sends it through the noisy
channel, the corresponding purification procedure works if the local
noise for the three qubits is such that $p\geq0.81$ \cite{AppCadV12},
which is noticeably worse than the threshold provided by the
procedure we give here for both cases, using the 8--qubit and the 6--qubit
resource graph states. The same happens in the case of the $W$, for
which the best-known local depolarizing noise threshold is
$p\geq0.69$
\cite{Apppurw}.
Moreover, the relative robustness of our scheme compared to these
protocols is of fundamental character as the latter cannot be
refined to achieve similar noise tolerances. To see this, consider
the case of the $W$ state. If one sends this state through the
quantum channel given by (\ref{noise}), no purification is possible
when the local noise level is such that $p\leq0.58$. This is
because, the output of this channel applied to the $W$ state is PPT
and, hence, non-distillable. Remarkably, our scheme would allow to
accomplish the task under channels with such (and even considerably
bigger) amounts of noise for the transmitted qubits.

So far we have considered the situation where $A$ wants to provide the three parties in $B$, i.e. $B_1,B_2,B_3$ with an arbitrary entangled state. However, regarding entanglement purification, one could also consider the situation where the final goal is that $A$ shares the target state with $B_1$ and $B_2$. In this case, only $2$ qubits need to be send. In case the $8$--qubit state is used, the thresholds change to $ p=0.46$ (compared to $p=0.50$ from above) if the non-transmitted qubits undergo noise with $q=0.99$ and to  $p=0.52$ for $q=0.97$ (compared to $p=0.56$). The change is even much more notable if the 6--qubit state is utilized. In this case, the threshold changes from $p=0.44$ to $p=0.34$ (assuming no noise on the qubits which are not transmitted).

Finally, notice that, of course, a similar purification scheme can
be established using teleportation. For that, $A$ would send one
share of a bipartite maximally entangled state to each party in $B$.
After the noisy transmission these pairs can be purified back to
maximally entangled states and then $A$ would simply teleport the
target state. The purification to bipartite maximally entangled
states succeeds whenever $p>1/3$ in our example (assuming no noise
in the qubits held by $A$), which offers a better noise tolerance
than our REP-based scheme. However, notice that the teleportation
strategy demands that the parties in $A$ remain together (otherwise
they cannot prepare and teleport the target state). On the other
hand, the REP strategy only requires local operations from the
parties in $A$ and, hence, once the resource graph state has been
distributed they can be spatially separated and still prepare any
target state. Despite this fact, the REP leads to almost the same threshold as the teleportation scheme. Since this case would be comparable to our protocol if $A$ keeps one of the three qubits for which the threshold is $p>0.34$, as stated above.

\end{document}